\newcolumntype{L}[1]{>{\raggedright\let\newline\\\arraybackslash\hspace{0pt}}m{#1}}
\newcolumntype{C}[1]{>{\centering\let\newline\\\arraybackslash\hspace{0pt}}m{#1}}
\newcolumntype{R}[1]{>{\raggedleft\let\newline\\\arraybackslash\hspace{0pt}}m{#1}}
\let\MYcaption\@makecaption
\let\@makecaption\MYcaption
\let\oldgls\gls
\let\oldglspl\glspl
\newcommand\fussy@ifnextchar[3]{%
	\let\reserved@d=#1%
	\def\reserved@a{#2}%
	\def\reserved@b{#3}%
	\futurelet\@let@token\fussy@ifnch}
\def\fussy@ifnch{%
	\ifx\@let@token\reserved@d
		\let\reserved@c\reserved@a
	\else
		\let\reserved@c\reserved@b
	\fi
	\reserved@c}
\renewcommand{\gls}[1]{%
\oldgls{#1}\fussy@ifnextchar.{\@checkperiod}{\@}}
\renewcommand{\glspl}[1]{%
\oldglspl{#1}\fussy@ifnextchar.{\@checkperiod}{\@}}
\newcommand{\@checkperiod}[1]{%
	\ifnum\sfcode`\.=\spacefactor\else#1\fi
}
\newacronym{wrt}{w.r.t.}{with respect to}
\newacronym{RHS}{R.H.S.}{right-hand side}
\newacronym{LHS}{L.H.S.}{left-hand side}
\newacronym{iid}{i.i.d.}{independent and identically distributed}
\newacronym{SOTA}{SOTA}{state-of-the-art}
\let\saved@bibitem\@bibitem\makeatother
\let\@bibitem\saved@bibitem\makeatother
\def\cref@getref#1#2{%
  \expandafter\let\expandafter#2\csname r@#1@cref\endcsname%
  \expandafter\expandafter\expandafter\def%
    \expandafter\expandafter\expandafter#2%
    \expandafter\expandafter\expandafter{%
      \expandafter\@firstoffive#2}}
\def\cpageref@getref#1#2{%
  \expandafter\let\expandafter#2\csname r@#1@cref\endcsname%
  \expandafter\expandafter\expandafter\def%
    \expandafter\expandafter\expandafter#2%
    \expandafter\expandafter\expandafter{%
      \expandafter\@secondoffive#2}}
   \def\label@noarg#1{%
    \cref@old@label{#1}%
    \@bsphack%
    \edef\@tempa{{page}{\the\c@page}}%
    \setcounter{page}{1}%
    \edef\@tempb{\thepage}%
    \expandafter\setcounter\@tempa%
    \cref@constructprefix{page}{\cref@result}%
    \protected@write\@auxout{}%
      {\string\newlabel{#1@cref}{{\cref@currentlabel}%
      {[\@tempb][\arabic{page}][\cref@result]\thepage}{}{}{}}}
    \@esphack}%
  \def\label@optarg[#1]#2{%
    \cref@old@label{#2}%
    \@bsphack%
    \edef\@tempa{{page}{\the\c@page}}%
    \setcounter{page}{1}%
    \edef\@tempb{\thepage}%
    \expandafter\setcounter\@tempa%
    \cref@constructprefix{page}{\cref@result}%
    \protected@edef\cref@currentlabel{%
      \expandafter\cref@override@label@type%
        \cref@currentlabel\@nil{#1}}%
    \protected@write\@auxout{}%
      {\string\newlabel{#2@cref}{{\cref@currentlabel}%
      {[\@tempb][\arabic{page}][\cref@result]\thepage}{}{}{}}}
    \@esphack}%
\crefname{equation}{}{}
\Crefname{equation}{}{}
\crefname{claim}{claim}{claims}
\crefname{step}{step}{steps}
\crefname{line}{line}{lines}
\crefname{condition}{condition}{conditions}
\crefname{dmath}{}{}
\crefname{dseries}{}{}
\crefname{dgroup}{}{}
\crefname{page}{page}{pages}
\crefname{Problem}{Problem}{Problems}
\crefname{Theorem}{Theorem}{Theorems}
\crefname{Corollary}{Corollary}{Corollaries}
\crefname{Proposition}{Proposition}{Propositions}
\crefname{Lemma}{Lemma}{Lemmas}
\crefname{Definition}{Definition}{Definitions}
\crefname{Example}{Example}{Examples}
\crefname{Assumption}{Assumption}{Assumptions}
\crefname{Remark}{Remark}{Remarks}
\crefname{Rem}{Remark}{Remarks}
\crefname{remarks}{Remarks}{Remarks}
\crefname{Appendix}{Appendix}{Appendices}
\crefname{Supplement}{Supplement}{Supplements}
\crefname{Exercise}{Exercise}{Exercises}
\crefname{TheoremA}{Theorem}{Theorems}
\crefname{CorollaryA}{Corollary}{Corollaries}
\crefname{PropositionA}{Proposition}{Propositions}
\crefname{LemmaA}{Lemma}{Lemmas}
\crefname{DefinitionA}{Definition}{Definitions}
\crefname{ExampleA}{Example}{Examples}
\crefname{RemarkA}{Remark}{Remarks}
\crefname{AssumptionA}{Assumption}{Assumptions}
\crefname{ExerciseA}{Exercise}{Exercises}
		\let\Cref\crtCref
		\let\cref\crtcref
\def\cleartheorem#1{%
    \expandafter\let\csname#1\endcsname\relax
    \expandafter\let\csname c@#1\endcsname\relax
}
\def\clearthms#1{ \@for\tname:=#1\do{\cleartheorem\tname} }
		\newtheorem{Theorem}{Theorem}
		\newtheorem{Corollary}{Corollary}
		\newtheorem{Proposition}{Proposition}
		\newtheorem{Lemma}{Lemma}
		\newtheorem{Theorem}{Theorem}
		\newtheorem{Corollary}[Theorem]{Corollary}
		\newtheorem{Proposition}[Theorem]{Proposition}
	\newtheorem{Definition}{Definition}
	\newtheorem{Remark}{Remark}
\theoremstyle{remark}
\theoremstyle{plain}
\newcommand{\qednew}{\nobreak \ifvmode \relax \else
		\ifdim\lastskip<1.5em \hskip-\lastskip
			\hskip1.5em plus0em minus0.5em \fi \nobreak
		\vrule height0.75em width0.5em depth0.25em\fi}
\NewDocumentCommand{\movedownsub}{e{^_}}{%
	\IfNoValueTF{#1}{%
		\IfNoValueF{#2}{^{}}
	}{%
		^{#1}
	}%
	\IfNoValueF{#2}{_{#2}}
}
\let\latexchi\chi
\RenewDocumentCommand{\chi}{}{\latexchi\movedownsub}
\newcommand{\calB}{\mathcal{B}}
\newcommand{\calC}{\mathcal{C}}
\newcommand{\calE}{\mathcal{E}}
\newcommand{\calL}{\mathcal{L}}
\newcommand{\calN}{\mathcal{N}}
\newcommand{\calS}{\mathcal{S}}
\newcommand{\calV}{\mathcal{V}}
\newcommand{\bA}{\mathbf{A}}
\newcommand{\bH}{\mathbf{H}}
\newcommand{\bN}{\mathbf{N}}
\newcommand{\bs}{\mathbf{s}}
\newcommand{\bU}{\mathbf{U}}
\newcommand{\bx}{\mathbf{x}}
\newcommand{\bX}{\mathbf{X}}
\newcommand{\bbN}{\mathbb{N}}
\newcommand{\bbR}{\mathbb{R}}
\newcommand{\bbZ}{\mathbb{Z}}
\DeclareSymbolFont{bsfletters}{OT1}{cmss}{bx}{n}
\DeclareSymbolFont{ssfletters}{OT1}{cmss}{m}{n}
\DeclareMathSymbol{\bsfGamma}{0}{bsfletters}{'000}
\DeclareMathSymbol{\ssfGamma}{0}{ssfletters}{'000}
\DeclareMathSymbol{\bsfDelta}{0}{bsfletters}{'001}
\DeclareMathSymbol{\ssfDelta}{0}{ssfletters}{'001}
\DeclareMathSymbol{\bsfTheta}{0}{bsfletters}{'002}
\DeclareMathSymbol{\ssfTheta}{0}{ssfletters}{'002}
\DeclareMathSymbol{\bsfLambda}{0}{bsfletters}{'003}
\DeclareMathSymbol{\ssfLambda}{0}{ssfletters}{'003}
\DeclareMathSymbol{\bsfXi}{0}{bsfletters}{'004}
\DeclareMathSymbol{\ssfXi}{0}{ssfletters}{'004}
\DeclareMathSymbol{\bsfPi}{0}{bsfletters}{'005}
\DeclareMathSymbol{\ssfPi}{0}{ssfletters}{'005}
\DeclareMathSymbol{\bsfSigma}{0}{bsfletters}{'006}
\DeclareMathSymbol{\ssfSigma}{0}{ssfletters}{'006}
\DeclareMathSymbol{\bsfUpsilon}{0}{bsfletters}{'007}
\DeclareMathSymbol{\ssfUpsilon}{0}{ssfletters}{'007}
\DeclareMathSymbol{\bsfPhi}{0}{bsfletters}{'010}
\DeclareMathSymbol{\ssfPhi}{0}{ssfletters}{'010}
\DeclareMathSymbol{\bsfPsi}{0}{bsfletters}{'011}
\DeclareMathSymbol{\ssfPsi}{0}{ssfletters}{'011}
\DeclareMathSymbol{\bsfOmega}{0}{bsfletters}{'012}
\DeclareMathSymbol{\ssfOmega}{0}{ssfletters}{'012}
\newcommand*\rel@kern[1]{\kern#1\dimexpr\macc@kerna}
\newcommand*\widebar[1]{%
  \begingroup
  \def\mathaccent##1##2{%
    \rel@kern{0.8}%
    \overline{\rel@kern{-0.8}\macc@nucleus\rel@kern{0.2}}%
    \rel@kern{-0.2}%
  }%
  \macc@depth\@ne
  \let\math@bgroup\@empty \let\math@egroup\macc@set@skewchar
  \mathsurround\z@ \frozen@everymath{\mathgroup\macc@group\relax}%
  \macc@set@skewchar\relax
  \let\mathaccentV\macc@nested@a
  \macc@nested@a\relax111{#1}%
  \endgroup
}
\DeclareMathOperator{\var}{var}
\DeclareMathOperator{\cov}{cov}
\newcommand{\ifbcdot}[1]{\ifblank{#1}{\cdot}{#1}}
\DeclarePairedDelimiterX\abs[1]{\lvert}{\rvert}{\ifbcdot{#1}}
\DeclarePairedDelimiterX\parens[1]{(}{)}{\ifbcdot{#1}}
\DeclarePairedDelimiterX\brk[1]{[}{]}{\ifbcdot{#1}}
\DeclarePairedDelimiterX\braces[1]{\{}{\}}{\ifbcdot{#1}}
\DeclarePairedDelimiterX\angles[1]{\langle}{\rangle}{\ifblank{#1}{\cdot,\cdot}{#1}}
\DeclarePairedDelimiterX\ip[2]{\langle}{\rangle}{\ifbcdot{#1},\ifbcdot{#2}}
\DeclarePairedDelimiterX\norm[1]{\lVert}{\rVert}{\ifbcdot{#1}}
\DeclarePairedDelimiterX\ceil[1]{\lceil}{\rceil}{\ifbcdot{#1}}
\DeclarePairedDelimiterX\floor[1]{\lfloor}{\rfloor}{\ifbcdot{#1}}
\DeclareFontFamily{U}{matha}{\hyphenchar\font45}
\DeclareFontShape{U}{matha}{m}{n}{
      <5> <6> <7> <8> <9> <10> gen * matha
      <10.95> matha10 <12> <14.4> <17.28> <20.74> <24.88> matha12
      }{}
\DeclareSymbolFont{matha}{U}{matha}{m}{n}
\DeclareFontFamily{U}{mathx}{\hyphenchar\font45}
\DeclareFontShape{U}{mathx}{m}{n}{
      <5> <6> <7> <8> <9> <10>
      <10.95> <12> <14.4> <17.28> <20.74> <24.88>
      mathx10
      }{}
\DeclareSymbolFont{mathx}{U}{mathx}{m}{n}
\DeclareMathDelimiter{\vvvert}{0}{matha}{"7E}{mathx}{"17}
\DeclarePairedDelimiterX\vertiii[1]{\vvvert}{\vvvert}{\ifbcdot{#1}}
\DeclarePairedDelimiterXPP\trace[1]{\operatorname{Tr}}{(}{)}{}{\ifbcdot{#1}} 
\DeclarePairedDelimiterXPP\col[1]{\operatorname{col}}{\{}{\}}{}{\ifbcdot{#1}} 
\DeclarePairedDelimiterXPP\row[1]{\operatorname{row}}{\{}{\}}{}{\ifbcdot{#1}} 
\DeclarePairedDelimiterXPP\erf[1]{\operatorname{erf}}{(}{)}{}{\ifbcdot{#1}}
\DeclarePairedDelimiterXPP\erfc[1]{\operatorname{erfc}}{(}{)}{}{\ifbcdot{#1}}
\DeclarePairedDelimiterXPP\KLD[2]{D}{(}{)}{}{\ifbcdot{#1}\, \delimsize\|\, \ifbcdot{#2}} 
\DeclarePairedDelimiterXPP\op[2]{\operatorname{#1}}{(}{)}{}{#2} 
\newcommand{\ud}{\,\mathrm{d}} 
\DeclarePairedDelimiterXPP\indicate[1]{{\bf 1}}{\{}{\}}{}{\ifbcdot{#1}}
\newcommand{\tc}[1]{^{(#1)}}
\NewDocumentCommand\ofrac{s m}{%
	\IfBooleanTF#1%
	{\dfrac{1}{#2}}%
	{\frac{1}{#2}}%
}
\NewDocumentCommand\ddfrac{s m m}{%
	\IfBooleanTF#1%
	{\dfrac{\mathrm{d} {#2}}{\mathrm{d} {#3}}}%
	{\frac{\mathrm{d} {#2}}{\mathrm{d} {#3}}}%
}
\NewDocumentCommand\ppfrac{s m m}{%
	\IfBooleanTF#1%
	{\dfrac{\partial {#2}}{\partial {#3}}}%
	{\frac{\partial {#2}}{\partial {#3}}}%
}
\newcommand{\setgiven}{:}
\providecommand\given{}
\DeclarePairedDelimiterX\Set[2]\{\}{%
	\if#1:%
		\renewcommand\given{\SetSymbol{:}}%
	\else%
		\renewcommand\given{\SetSymbol[\delimsize]{#1}}%
	\fi%
#2
}
\NewDocumentCommand\set{s O{\setgiven} m}{%
	\IfBooleanTF#1%
	{\Set*{#2}{#3}}%
	{\Set{#2}{#3}}%
}
\NewDocumentCommand{\evalat}{ s O{\big} m e{_^} }{%
\IfBooleanTF{#1}%
{\left. #3 \right|}{#3#2|}%
\IfValueT{#4}{_{#4}}%
\IfValueT{#5}{^{#5}}%
}
\providecommand\given{}
\DeclarePairedDelimiterXPP\cprob[1]{}(){}{
\renewcommand\given{\nonscript\,\delimsize\vert\allowbreak\nonscript\,\mathopen{}}%
\DeclarePairedDelimiterXPP\cexp[1]{}[]{}{
\renewcommand\given{\nonscript\,\delimsize\vert\allowbreak\nonscript\,\mathopen{}}%
#1%
}
\DeclareDocumentCommand \P { s e{_^} d() g } {%
	\mathbb{P}%
	\IfBooleanTF{#1}%
		{
			\IfValueT{#2}{_{#2}}%
			\IfValueT{#3}{^{#3}}%
			\IfValueTF{#5}{\cprob{#4 \given #5}}{\IfValueT{#4}{\cprob{#4}}}%
		}%
		{
			\IfValueT{#2}{_{#2}}%
			\IfValueT{#3}{^{#3}}%
			\IfValueTF{#5}{\cprob*{#4 \given #5}}{\IfValueT{#4}{\cprob*{#4}}}%
		}%
}
\DeclareDocumentCommand \E { s e{_^} o g } {%
	\mathbb{E}%
	\IfBooleanTF{#1}%
		{
			\IfValueT{#2}{_{#2}}%
			\IfValueT{#3}{^{#3}}%
			\IfValueTF{#5}{\cexp{#4 \given #5}}{\IfValueT{#4}{\cexp{#4}}}%
		}%
		{
			\IfValueT{#2}{_{#2}}%
			\IfValueT{#3}{^{#3}}%
			\IfValueTF{#5}{\cexp*{#4 \given #5}}{\IfValueT{#4}{\cexp*{#4}}}%
		}%
}
\DeclareDocumentCommand \Var { s e{_^} d() g } {%
	\var%
	\IfBooleanTF{#1}%
		{
			\IfValueT{#2}{_{#2}}%
			\IfValueT{#3}{^{#3}}%
			\IfValueTF{#5}{\cprob{#4 \given #5}}{\IfValueT{#4}{\cprob{#4}}}%
		}%
		{
			\IfValueT{#2}{_{#2}}%
			\IfValueT{#3}{^{#3}}%
			\IfValueTF{#5}{\cprob*{#4 \given #5}}{\IfValueT{#4}{\cprob*{#4}}}%
		}%
}
\DeclareDocumentCommand \Cov { s e{_^} d() g } {%
	\cov%
	\IfBooleanTF{#1}%
		{
			\IfValueT{#2}{_{#2}}%
			\IfValueT{#3}{^{#3}}%
			\IfValueTF{#5}{\cprob{#4 \given #5}}{\IfValueT{#4}{\cprob{#4}}}%
		}%
		{
			\IfValueT{#2}{_{#2}}%
			\IfValueT{#3}{^{#3}}%
			\IfValueTF{#5}{\cprob*{#4 \given #5}}{\IfValueT{#4}{\cprob*{#4}}}%
		}%
}
\NewDocumentCommand \dist {m o o} {%
\mathrm{#1}\left(%
	\IfValueT{#3}{%
		\tl_if_blank:nTF{ #3 }{\cdot\, \middle|\, }{#3\, \middle|\, }%
	}
	\IfValueT{#2}{#2}%
\right)%
}
\NewDocumentCommand {\cbrace} {t+ D[]{black} D(){\widthof{#5}} m m } {%
	\begingroup%
		\color{#2}
		\IfBooleanTF{#1}{%
			\overbrace{#4}^%
		}{
			\underbrace{#4}_%
		}%
		{\parbox[c]{#3}{\centering\footnotesize{#5}}}%
	\endgroup%
}
\let\oldforall\forall
\renewcommand{\forall}{\oldforall \, }
\let\oldexist\exists
\renewcommand{\exists}{\oldexist \, }
\newcommand{\rankcolor}[2]{%
	\expandafter\renewcommand\csname #1\endcsname[1]{%
		\ifblank{##1}{%
			{\color{#2} \textbf{#2}}%
		}{%
			\ifmmode
				\textcolor{#2}{\bm{##1}}%
			\else%
				{\color{#2} \textbf{##1}}%
			\fi	
		}%
	}
}
\DeclareDocumentCommand{\includeCroppedPdf}{ o O{./Figures/} m }{
	\IfFileExists{#2#3-crop.pdf}{}{%
		\immediate\write18{pdfcrop #2#3.pdf #2#3-crop.pdf}}%
	\includegraphics[#1]{#2#3-crop.pdf}
}
\newcommand*{\addFileDependency}[1]{
  \typeout{(#1)}
  \@addtofilelist{#1}
  \IfFileExists{#1}{}{\typeout{No file #1.}}
}
\definecolor{gray90}{gray}{0.9}
\def\colorlist{red,blue,brown,cyan,darkgray,gray,lightgray,green,lime,magenta,olive,orange,pink,purple,teal,violet,white,yellow}
\def\startcomment{[}
	\newcommand{\createcolor}[1]{%
			\expandafter\newcommand\csname #1\endcsname[1]{{\color{#1} ##1}}%
	}
	\newcommand{\msout}[1]{\text{\color{green} \st{\ensuremath{#1}}}}
	\newcommand{\del}[1]{{\color{green}\ifmmode \msout{#1}\else\st{#1}\fi}}
	\newcommand{\createcolor}[1]{%
			\expandafter\newcommand\csname #1\endcsname[1]{%
				\noexpandarg%
				\StrChar{##1}{1}[\firstletter]%
				\if\firstletter\startcomment%
					\relax
				\else%
					##1
				\fi
			}%
	}
	\newcommand{\msout}[1]{}
	\newcommand{\del}[1]{}
\def\@tempa#1,{%
    \ifx\relax#1\relax\else
        \createcolor{#1}%
        \expandafter\@tempa
    \fi
}
\newcommand{\hhide}[1]{}
	\def\@testdef #1#2#3{%
		\def\reserved@a{#3}\expandafter \ifx \csname #1@#2\endcsname
			\reserved@a  \else
			\typeout{^^Jlabel #2 changed:^^J%
				\meaning\reserved@a^^J%
				\expandafter\meaning\csname #1@#2\endcsname^^J}%
			\@tempswatrue \fi}
\newif\ifextended
\newacronym{GSP}{GSP}{Graph signal processing}
\newacronym{TSP}{TSP}{Topological signal processing}
\newacronym{CSO}{CSO}{complexon shift operator}
\newacronym{SCSP}{SCSP}{simplicial complex signal processing}
\newacronym{CSP}{CSP}{complexon signal processing}
\newacronym{GSO}{GSO}{graph shift operator}
\newacronym{GRSO}{GRSO}{graphon shift operator}
\newacronym{GRSP}{GRSP}{graphon signal processing}
\newacronym{LSI}{LSI}{linear shift-invariant}
\begin{document}
\title{Spectral Convergence of Complexon Shift Operators} 

\author{%
\IEEEauthorblockN{Purui~Zhang, Xingchao~Jian, Feng~Ji, Wee~Peng~Tay, and Bihan~Wen}
\IEEEauthorblockA{School of Electrical and Electronic Engineering\\
Nanyang Technological University, Singapore\\
Email: \{PURUI001, XINGCHAO001\}@e.ntu.edu.sg, \{jifeng, wptay, bihan.wen\}@ntu.edu.sg}
}

\maketitle


\begin{abstract}
\gls{TSP} utilizes simplicial complexes to model structures with higher order than vertices and edges.
In this paper, we study the transferability of TSP via a generalized higher-order version of graphon, known as complexon. 
We recall the notion of a complexon as the limit of a simplicial complex sequence \cite{LDSC}. 
Inspired by the graphon shift operator and message-passing neural network, we construct a marginal complexon and complexon shift operator (CSO) according to components of all possible dimensions from the complexon.
We investigate the CSO's eigenvalues and eigenvectors and relate them to a new family of weighted adjacency matrices. We prove that when a simplicial complex signal sequence converges to a complexon signal, the eigenvalues, eigenspaces, and Fourier transform of the corresponding CSOs converge to that of the limit complexon signal. This conclusion is further verified by two numerical experiments. These results hint at learning transferability on large simplicial complexes or simplicial complex sequences, which generalize the graphon signal processing framework.
\end{abstract}

\section{Introduction}
\label{sec:intro}

\gls{GSP} offers powerful tools for modeling signals associated with graph structures\cite{OrtPie:J18}.
When presented with a fixed graph framework, one can design graph filters\cite{SanJos:C13,RuiAle:J20} and graph neural networks\cite{Sca:J09,Gama:J19} tailored for diverse tasks, including regression and classification\cite{DonFro:J20,LiaWan:J22}, in which the eigendecomposition of the graph filter \gls{GSO} plays a pivotal role\cite{SegAle:J17}. There are mainly two extensions for \gls{GSP}, one from the aspect of higher-order geometric structures\cite{ZhaCui:J20} and another from the aspect of asymptotic analysis\cite{MorLeu:J21}.

The first extension addresses the limitations of a graph structure \cite{ZhaCui:J20, JiaJiTay23}. Since a graph only captures information on nodes and edges, it cannot represent higher-order relationships between multiple nodes. One approach is to use hypergraphs to model the higher-order relationships \cite{ZhaXia:C22,PenArc:J23,JiaJiTay23}. However, in some cases, signals are embedded in a specific topological structure, such as a manifold\cite{WanAle:C22}. A simplicial complex becomes a more appropriate model since it can represent data on a structure with the help of homology\cite{BarSar:J20,BatCla:C23,Yang:C22,LeeJiTay22,WuYip:J24}. To develop \gls{SCSP}, Hodge Laplacians are the major components to be used to derive generalized Laplacians\cite{JiTay:A22} as suitable shift operators. Furthermore, it is possible to consider generalized signals on the graph vertices \cite{JiTay19}. 

However, even with the first extension, the dynamic and large-scale structures encountered in signal processing turn out to be a problem. As for standard \gls{GSP} and \gls{SCSP}, the topological structure is assumed to be fixed. When the structure itself varies, signal processing elements like shift operators, filters, and Fourier transforms, also change \cite{JiTayOrt23}. Besides, signal processing techniques such as Fourier transform are usually prohibitively expensive on large graphs or simplicial complexes. 

Hence, the second extension explores the limit structure of signal processing in order to deal with dynamic and large-scale structures. The papers \cite{RuiChaRib:J21,RuiCha:C20,Mas23} utilize graphons to study the transferability of graph filters. Graphon signal processing tools such as graphon shift operators and graphon Fourier transforms are introduced to investigate the transferability of a graphon as the limit of a graph sequence. Analogous to a graphon, a complexon is defined as the limit of simplicial complexes and the sampling of large simplicial complex structures \cite{LDSC}. However, signal processing tools for complexons are yet to be developed. 

In this work, we propose a novel \gls{CSO}. We derive its transferability properties as a limit theory of simplicial complexes. Such a theory paves the way for \gls{CSP}, making it a viable tool for analyzing signals on large and dynamic simplicial complex structures.
Our main contributions are summarized as follows:
\begin{itemize}
\item We propose the concept of a \gls{CSO} for complexons, analogous to the \gls{GRSO} for graphons.
\item We propose the raised adjacency matrix for simplicial complex and investigate its relation to the \gls{CSO} of its induced complexon.
\item We derive the transferability property of the \gls{CSO} and numerically verify it.
\item We derive the convergence of the complexon signal Fourier transform and use a mathematical model to illustrate it.
\end{itemize}

\ifextended
The rest of the paper is organized as follows. In \cref{sec:prelim}, we briefly introduce the main concepts of \gls{GSP}, graphon, and simplicial complex. In \cref{sec:complexon}, we present the concepts and properties of complexons, and define \gls{CSO} as a marginal complexon. In \cref{sec:radj}, we introduce the concept of a raised adjacency matrix for simplicial complex and relate it to \gls{CSO}. In \cref{sec:fourier}, we introduce the concept of complexon Fourier transform. In \cref{sec:convergence}, we show all the convergence results, including eigenvalue convergence of \gls{CSO} and conditioned convergence of complexon Fourier transform. Finally, in \cref{sec:exp}, we conduct two synthetic experiments to illustrate the eigenvalue convergence and Fourier transform convergence.
\else
\relax
\fi

\section{Preliminaries}
\label{sec:prelim}
In this section, we review the basic concepts of \gls{GSP}, graphons, and simplicial complexes, which are fundamental \gls{TSP} components used in setting up the theory of complexon signal processing.

\subsection{Graph And Its Shift Operators}
\label{sec:gsp}
A graph $G=(\calV(G),\calE(G))$ is a tuple, where $\calV(G)=\{v_1,v_2,\dots,v_n\}$ is the set of nodes and $\calE(G)$ is the set of edges. We define $V(G):=\abs{\calV(G)}$ and $E(G):=\abs{\calE(G)}$.
For a graph $G$, its corresponding adjacency matrix is defined as $\bA\in \set{0,1}^{n\times n}$, where $\bA_{ij}=1$ if $(v_i,v_j)\in\calE(G)$, and $0$ otherwise. For a weighted graph, $\bA_{ij}=w_{ij}$, where $w_{ij}$ is the weight of the edge $(v_i,v_j)$. In \gls{GSP}, for graph signal $(G,\bx)$, with $\bx\in\bbR^n$, a typical \gls{GSO} is the adjacency matrix $\bA$ of $G$, and the shift of the signal is $\bA \bx$. 


\subsection{Graphon}
\label{sec:gn}

The works \cite{RuiChaRib:J21, RuiCha:C20} utilize the notion of \emph{graphons} to study the transferability of \gls{GSP} among different graphs that admit similar patterns. A graphon is the limit object of a dense graph sequence \cite{Lov:12}. It is defined as a symmetric measurable function $W:[0,1]^2\to[0,1]$. 

A graph induces a graphon via interval equipartitioning. 
\begin{Definition}
\label{def:equi}
A standard $n$-equipartition of $[0,1]$ is \newline
$\set{I_1,I_2,\dots ,I_n}$, where $I_j=[\frac{j-1}{n},\frac{j}{n})$ for $j=1,2,\dots,n-1$,
and $I_n=[\frac{n-1}{n},1]$.
\end{Definition}
Given a (weighted) graph signal $(G,\bx)$ and the (weighted) graph adjacency matrix $\bA$, the induced graphon signal $(W_G,\bX_G)$ is defined as follows. Firstly, label all vertices as $v_1,v_2,\dots,v_n$. Then let $W_G(x,y)=\bA_{ij}$ if $x\in I_i$, $y\in I_j$. Then, for $\bX_G\in L^2([0,1])$, $\bX_G=\sum_{i=1}^{n}\bx(v_i)\mathbf{1}_{I_i}$. Here $\set{I_i}_{i=1}^{n}$ is a standard $n-$equipartition, and $\bx(v_i)$ refers to the signal on $v_i$.

Now we introduce the convergence of graphs. The first way to define graph convergence is through homomorphism density, which is known as \emph{left convergence}\cite{Lov:12}. 
The second convergence definition, which is used in our research, is via cut distance, which is also called \emph{metric convergence}. Given graphons $W_1$ and $W_2$, define their labeled cut-distance as
\begin{align*}
d_\square(W_1,W_2)=\sup_{X,Y\in\calB[0,1]} \abs*{\int_{X\times Y} (W_1-W_2) \ud x\ud y},
\end{align*}
and cut distance as
\begin{align}
\delta_\square(W_1,W_2)=\inf_{\phi\in\Phi}\sup_{X,Y\in\calB[0,1]} \abs*{\int_{X\times Y} (W_1-W_2^\phi) \ud x\ud y},
\end{align}
where $\calB[0,1]$ stands for all Borel sets in $[0,1]$, $\Phi$ is the set of $[0,1]\to[0,1]$ measure-preserving transformations, and $W_2^\phi(x,y)=W_2(\phi(x),\phi(y))$.  

A graphon sequence $(W_n)_{n\geq1}$ is said to converge to $W$ in cut metric if
\begin{align*}
\lim_{n\to\infty}\delta_\square(W_n,W)=0,
\end{align*}
and the graph sequence $G_n\to W$ in cut metric if the induced graphon sequence $W_{G_n}\to W$ in cut metric.

For graphs and graphons, left convergence and metric convergence are equivalent. This can be proved using the Counting Lemma and Inverse Counting Lemma (see Theorem 2.7 and Theorem 3.7 in \cite{gn1}).

In graph spectral analysis, eigenvalue and eigenvectors of a \gls{GSO} are its fundamental components. To investigate their continuous analog for graphon, we define the graphon shift operator $T_{W}:L^2([0,1])\to L^2([0,1])$ as follows\cite{Lov:12}: 
\begin{align}
T_{W}\bX(x)=\int_0^1W(x,y)\bX(y)\ud y,
\end{align}
where $(W,\bX)$, with $\bX\in L^2([0,1])$, is a graphon signal.

It can be shown that the operator $T_W$ is linear, self-adjoint, bounded, and compact\cite{spec}. Its eigenvalues are countable, and the only possible accumulation point is 0. Its corresponding eigenvectors form an orthonormal basis in $L^2([0,1])$. By applying $T_W$ on $X$, the output on $x$ is obtained by gathering information from all other $y\in[0,1]$ with different weights.

\subsection{Simplicial Complex}
\label{sec:sc}

Given a node set $\calV=\set{v_1,v_2,\dots,v_n}$, a set $F\subseteq 2^{\calV}$ is called an $n$-node abstract simplicial complex if the following conditions hold:
\begin{itemize}
\item $(v_i)\in F$ for $i=1,\dots,n$.
\item If $\sigma\in F$ and $\sigma'\subseteq\sigma$, then $\sigma'\in F$.
\end{itemize}

A $(d+1)$-element set inside $F$ is called a $d$-dimensional simplex. The dimension of $F$, namely $\dim F$, is the highest dimension of all simplices. The $d'$-dimensional skeleton of $F$ is the subset of $F$ containing all simplices of dimension no higher than $d'$. For example, the $1$-dimensional skeleton of a simplicial complex is a graph. Let $F\tc{d}$ be the collection of all simplices with dimension $d$. 

Throughout our paper, we use $\set{v_0, v_1, \ldots,v_n}$ to denote a set with $(n+1)$ vertices, and use $\calV(\sigma)$ to represent the set of nodes in simplex $\sigma$. We use $\sigma^n=(v_0, v_1, \ldots,v_n)$ to specifically denote an $n$-dimensional simplex $\sigma^n$.

\section{Complexon with Vertex Signals}
\label{sec:complexon}

In this section, we introduce the concept of complexon and complexon shift operators.

A graphon is the limit of a sequence of graphs and can be utilized to analyze the transferability of \gls{GSP}. In order to study the transferability of \gls{TSP}, we require the graphon's counterpart for a simplicial complex, known as a \emph{complexon} \cite{LDSC}.

\begin{Definition}[Complexon]
A function 
\begin{align*}
W: \bigsqcup_{d=0}^{D}[0,1]^{d+1}\to [0,1]
\end{align*}
is called a $D$-dimensional complexon, where $D\geq1$ is an integer, if it satisfies the following properties:
\begin{enumerate}
\item It is symmetric. For $0\leq d\leq D$,
\begin{align*}
W(x_1,x_2,\dots ,x_{d+1})=W(y_1,y_2,\dots ,y_{d+1})
\end{align*} holds
if $(y_1,y_2,\dots ,y_{d+1})$ is a permutation \newline
of $(x_1,x_2,\dots ,x_{d+1})$.
\item It is measurable.
\item For the case $d=0$, $W(x)\equiv1$ for any $x\in [0,1]$.
\end{enumerate}
Furthermore, given a $D$-dimensional complexon $W$, its restriction on $[0,1]^{d+1}$ is called its $d$-dimensional
component, denoted as $W\tc{d}$.
\end{Definition}



Similar to a graph inducing a graphon, a simplicial complex induces a complexon.
The induced complexon $W_F$ given $D$-dimensional simplicial complex $F$ with $n$
nodes $v_1,v_2,\dots ,v_n$, is introduced in \cite{LDSC}. 
Assume $\set{I_1,I_2,\dots ,I_n}$ is a standard $n$-equipartition of $[0,1]$.
If $x_i\in I_{g_i}$, where $g_i\in\set{1,2,\dots ,n}$, then define $W_F(x_1,x_2,\dots ,x_{d+1})=1$
if $(v_{g_1},v_{g_2},\dots ,v_{g_{d+1}})$
$\in F\tc{d}$, and 0 otherwise.

Furthermore, we can define the induced complexon signal as follows.

\begin{Definition}\label{def:cn_signal}
Given $D$-dimensional, $n-$vertex simplicial complex $F$, define its signal as pair $(F,\bs_F)$, where $\bs_F\in\bbR^n$, and the $i-$th element of $\bs_F$ is the signal on vertex $v_i$. Furthermore, define its induced complexon signal as pair $(W_F, \bX_F)$, where $W_F$ is the induced complexon of $F$, and $\bX_F=\sum_{i=1}^{n}\bs_F(v_i)\mathbf{1}_{I_i}$. Here $\set{I_i}_{i=1}^{n}$ is a standard $n-$equipartition.
\end{Definition}



Now we can define the convergence of simplicial complex sequences. Like the limits of graph sequences, we have convergence in two different senses. One is built upon homomorphism density, and the other upon the cut distance. Here we only present the definition using cut distance.

\begin{Definition}
Consider a $D$-dimensional simplicial complex sequence $(F_n)_{n\geq1}$ (with their corresponding induced complexons $W_{F_n}$) and a $D$-dimensional complexon $W$. For $1\leq d\leq D$, we say that $F_n\to W$ converges in $d$-dimensional cut distance (metric convergence) if
\begin{align*}
\lim_{n\to\infty}\delta_{\square,d}(W_{F_n},W)=0,
\end{align*}
where $\delta_{\square,d}(W_{F_n},W)$ is the $d$-dimensional cut distance:
\begin{align}
\delta_{\square,d}(W_{F_n},W)=\inf_{\phi\in\Phi}\sup_{X_1,X_2,\dots,X_{d-1}\subseteq\calB[0,1]}\abs*{H_n-H^\phi},
\end{align}
with
\begin{align*}
H_n &= \int_{\Omega}W_{F_n}\tc{d}(x_1,x_2,\dots,x_{d+1})\ud x, \\
H^\phi &= \int_{\Omega}W\tc{d}(\phi(x_1),\phi(x_2),\dots,\phi(x_{d+1}))\ud x,
\end{align*}
$\ud x=\prod_{i=1}^{d+1}\ud x_i$, $\Omega=X_1\times X_2\times\dots\times X_{d+1}$.

Sometimes the \emph{labeled} $d$-dimensional cut distance is used:
\begin{align*}
d_{\square,d}(W_{F_n},W)=\sup_{X_1,X_2,\dots,X_{d-1}\subseteq\calB[0,1]}\abs*{H_n-H}.
\end{align*}
\end{Definition}
We abbreviate $W\tc{d}(\phi(x_1),\phi(x_2),\dots,\phi(x_{d+1}))$ as\newline $(W\tc{d})^\phi(x_1,x_2,\dots,x_{d+1})$.

In the context of complexons, we define $(W,\bX)$, with $\bX:[0,1]\to \bbR \in L^2([0,1])$, as a \emph{complexon signal} on complexon $W$.

The graphon shift operator $T_{W_G}$ is defined as a kernel operator. We anticipate that for a complexon component $W\tc{d}$, the complexon shift operator can be defined similarly. 
Assume $x, z_1, z_2, \ldots,z_d\in[0,1]$ represent $d+1$ vertices of a $d-$simplex. Inspired by the message-passing neural network framework \cite{Gil:C17}, we consider a $d$-dimensional complexon shift operator to aggregate information from vertices $z_1,z_2\ldots,z_d$ to $x$ to give:
\begin{align*}
\overline{\bX}(z_1,z_2,\ldots,z_d)=\sum_{i=1}^{d}\alpha_i\bX(z_i),
\end{align*}
where $\sum_{i=1}^{d}\alpha_i=1$ for normalization.

Thus, we can define the complexon shift operator as $\int_{[0,1]^d}W(x, z_1, z_2, \dots ,z_d)\overline{\bX}(z)\ud z$, where $\ud z=\prod_{i=1}^{d} \ud z_i$. After simplification of the integral, we obtain the following definition.
\begin{Definition}[Complexon Shift]\label{def:cn_shift}
Given a $D$-dimensional complexon $W$, its \gls{CSO}
at dimension $d$, denoted as $T_W\tc{d}$, is defined as
\begin{align}
T_W\tc{d}\bX(x)=\int_{0}^{1}\overline{W}\tc{d}(x,y)\bX(y)\ud y, 
\label{eq:1}
\end{align}
where 
\begin{align}
\overline{W}\tc{d}(x,y)=\int_{[0,1]^{d-1}}W(x,y,z_1,z_2,\dots ,z_{d-1})\prod_{i=1}^{d-1}\ud z_i
\end{align}
is the $d-$marginal complexon of $W$.
\end{Definition}
\section{Raised Adjacency and Complexon Shift}
\label{sec:radj}
In this section, we relate the concept of \gls{CSO} to a family of adjacency matrices, which we refer to as \emph{raised adjacency matrices}.

For \gls{GRSO}, we have the following fact.

\begin{Corollary}
\label{cor:induce_signal}
Given that $(G,\bx)$ is a graph signal, $\bA$ is the adjacency matrix, $(G,\bA\bx)$ is the shifted graph signal, $(W_G,\bX_G)$ is the induced graphon signal of $(G,\bx)$, and $(W_G,\tilde{X})$ is the induced graphon signal of $(G,\frac{1}{n}\bA\bx)$, and $T_{W_G}$ is the \gls{GRSO} of $W_G$, then, $\tilde{X}=T_{W_G}X$. 
\end{Corollary}

Given that we defined the \gls{CSO}, it is natural to look into its examples. Specifically, given complexon signal $(W_F,\bX_F)$ induced by simplicial complex signal $(F,\bs_F)$, we need to figure out the representation of \gls{CSO} $T_W\tc{d}$, or equivalently, the representation of marginal complexon $\overline{W}\tc{d}$ .

\begin{Definition}
\label{radj}
Given an $n$-node simplicial complex $K$ and dimension $d\in\set{2,\dots,\dim K}$, a $d$-raised adjacency matrix $\bN\tc{d}\in [0,1]^{n\times n}$ is such that
\begin{align}
\bN\tc{d}_{ij}=\frac{\abs*{\set*{\calV(\sigma)\in K\tc{d-2}\mid \set{v_i,v_j}\bigcup\calV(\sigma)\in K\tc{d}}}}{n^{d-1}}. \label{Ndij}
\end{align}
When $d=1$, $\bN\tc{d}_{ij}=1$ if $(v_i,v_j)\in K\tc{1}$ and equals 0 if not. In this case, $N\tc{1}$ is just the adjacency matrix considering only the 1-dimension skeleton (graph structure) of $K$.
\end{Definition}

From \cref{radj}, a raised adjacency matrix is symmetric. We also have $\bN\tc{d}_{ij}<1$ for $d\geq2$, since we are only counting simplices in the numerator and no repeated vertices are allowed. Also, if $(v_i,v_j)\notin K\tc{1}$, then we obtain 
\begin{align*}
\set*{\calV(\sigma)\in K\tc{d-2} \mid \set{v_i,v_j}\bigcup\calV(\sigma)\in K\tc{d}}=\emptyset,
\end{align*}
which implies $\bN\tc{d}_{ij}=0$.
Therefore, the raised adjacency matrix is a special weighted adjacency
matrix, whose edge weights are bounded by the entries of the standard adjacency matrix of the 1-dimensional skeleton of $F$.

With \cref{radj}, we can figure out the closed-form solution of marginal complexon induced by a simplicial complex.

\begin{Proposition}\label{prop:adj_induce_GSO}
Let $F$ be a $D$-dimensional simplicial complex with $n$ nodes $\set{v_1,v_2,\dots ,v_n}$ and $\set{I_i}_{i=1}^n$ be a standard $n$-equipartition of $[0,1]$.
Then, for any $1\leq d\leq D$, the induced $d$-dimensional marginal complexon $\overline{W}_F\tc{d}(x,y)=\bN\tc{d}_{ij}$ if $x\in I_i$, $y\in I_j$.
\end{Proposition}
\begin{proof}
See \cref{prop:adj_induce_GSO:proof}.
\end{proof}

Using \cref{prop:adj_induce_GSO}, we can calculate the specific eigenvalue and eigenfunctions of the induced marginal complexon.

\begin{Proposition}[Eigenspace of Marginal Complexon]
\label{prop:marg_cn}
Given a $D$-dimensional simplicial complex $F$ with $n$ nodes, its induced complexon is $W_F$. Let $\bN\tc{d}$ be the $d-$raised adjacency matrix of $F$, and $\set{(\lambda_i\tc{d},v_i\tc{d})\mid i\in\calL}$ be the ordered eigenvalue-eigenvector pairs, where $\calL\subseteq\bbZ\backslash\{0\}$ is a finite nonzero integer index set.
For any $1\leq d\leq D$, the $d$-dimensional \gls{CSO} is $T_{W_F}\tc{d}$. Let $\set{(\lambda_i(T_{W_F}\tc{d}), \varphi_i(T_{W_F}\tc{d})) \mid i\in\bbZ\backslash\set{0}}$ be the eigenpairs of $T_{W_F}\tc{d}$. Then, for $i\in\calL$, we have the following conclusions:
\begin{enumerate}
\item $\lambda_i(T_{W_F}\tc{d})=\frac{1}{n}\lambda_i\tc{d}$;
\item $\varphi_i(T_{W_F}\tc{d})(x)=\sqrt{n}(v_i\tc{d})_j$ if $x\in I_j$;
\item $\{\varphi_i\}_{i\in\calL}$ is an orthonormal basis of a subspace\newline $\calS\subseteq L^2([0,1])$;
\end{enumerate}
For $i\notin\calL$, we can let $\lambda_i(T_{W_F}\tc{d})=0$, $\varphi_i(T_{W_F}\tc{d})=\psi_i$, such that
$\{\psi_i\}_{i\notin\calL}$ is an orthonormal basis of $\calS^\perp$.
\end{Proposition}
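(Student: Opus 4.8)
The plan is to reduce the statement to the graphon case already recorded in \cref{thm:eig_graphon}. By \cref{prop:adj_induce_GSO}, the marginal complexon $\overline{W}_F\tc{d}$ is the step function equal to $\bN\tc{d}_{ij}$ on $I_i\times I_j$; equivalently, $\overline{W}_F\tc{d}$ is exactly the graphon induced by the weighted graph $H$ on the labelled vertices $v_1,\dots,v_n$ whose symmetric $[0,1]$-valued adjacency matrix is $\bN\tc{d}$. Since \eqref{eq:1} expresses $T_{W_F}\tc{d}$ as the kernel operator with kernel $\overline{W}_F\tc{d}$, the $d$-dimensional \gls{CSO} of $F$ coincides with the graphon shift operator $T_{W_H}$ of $H$. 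Hence all four conclusions would follow verbatim from \cref{thm:eig_graphon} applied to $H$, once one notes that the proof of \cref{thm:eig_graphon} (stated there for $0$–$1$ graphs) uses only that the induced graphon is the step kernel built from the adjacency entries, and so goes through unchanged for weighted $H$.

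For a self-contained argument I would spell this out directly. Introduce the linear map $\iota:\bbR^n\to L^2([0,1])$, $\iota(v)=\sqrt n\sum_{j=1}^n v_j\indicatore{I_j}$. A one-line computation gives $\langle\iota(u),\iota(v)\rangle_{L^2}=u\T v$, so $\iota$ is an isometry onto the subspace $\calS_n\subseteq L^2([0,1])$ of functions constant on each $I_j$. Feeding a step function into \eqref{eq:1} and using \cref{prop:adj_induce_GSO} yields, for $x\in I_i$,
\begin{align*}
T_{W_F}\tc{d}\iota(v)(x)=\int_0^1\overline{W}_F\tc{d}(x,y)\,\iota(v)(y)\ud y=\sum_{j=1}^n\bN\tc{d}_{ij}\sqrt n\,v_j\cdot\tfrac1n=\iota\!\left(\tfrac1n\bN\tc{d}v\right)\!(x),
\end{align*}
i.e.\ $\iota$ intertwines $\tfrac1n\bN\tc{d}$ on $\bbR^n$ with $T_{W_F}\tc{d}$ on $\calS_n$. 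Thus each eigenpair $(\lambda_i\tc{d},v_i\tc{d})$ of $\bN\tc{d}$ produces the eigenpair $(\tfrac1n\lambda_i\tc{d},\varphi_i)$ with $\varphi_i:=\iota(v_i\tc{d})$, which gives claims (1) and (2); orthonormality of $\{\varphi_i\}_{i\in\calL}$ (claim (3)) is then just the isometry of $\iota$ together with orthonormality of the eigenvectors of the symmetric matrix $\bN\tc{d}$.

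Finally I would handle $i\notin\calL$. Because \eqref{eq:1} forces the range of $T_{W_F}\tc{d}$ to lie in $\calS_n$, we get $\calS_n^\perp\subseteq\ker T_{W_F}\tc{d}$, so every eigenfunction with nonzero eigenvalue already lies in $\calS_n$ and is therefore captured among the $\varphi_i$, $i\in\calL$. Hence one may legitimately set $\lambda_i(T_{W_F}\tc{d})=0$ for $i\notin\calL$, and — $L^2([0,1])$ being separable — take $\{\psi_i\}_{i\notin\calL}$ to be any orthonormal basis of $\calS^\perp$, the orthogonal complement of the closed span of $\{\varphi_i\}_{i\in\calL}$; together with $\{\varphi_i\}_{i\in\calL}$ this is a full orthonormal eigenbasis of $T_{W_F}\tc{d}$, and the prescribed ordering is inherited from that of $\bN\tc{d}$ because multiplication by $\tfrac1n>0$ is order-preserving. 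The analytic ingredients (self-adjointness, compactness, and the $L^2$ spectral decomposition of $T_{W_F}\tc{d}$) are already recorded for marginal complexons, so I expect no new estimates; the only genuinely fiddly part, and the place where care is needed, is the bookkeeping around the finite index set $\calL$ and the extension of $\{\varphi_i\}_{i\in\calL}$ to an orthonormal basis of all of $L^2([0,1])$, exactly mirroring \cref{thm:eig_graphon}.
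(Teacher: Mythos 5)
Your proposal is correct and follows essentially the same route as the paper: the paper's proof is exactly your ``self-contained'' computation, verifying $T_{W_F}\tc{d}\varphi_i=\frac{1}{n}\lambda_i\tc{d}\varphi_i$ by integrating the step kernel over the equipartition and checking orthonormality via $\langle\varphi_i,\varphi_j\rangle=\sum_k (v_i)_k(v_j)_k=\delta_{ij}$, which is your isometry/intertwining argument written out per eigenvector. Your treatment is if anything slightly more complete, since you also justify the $i\notin\calL$ clause (range contained in the step-function subspace, hence $\calS_n^\perp\subseteq\ker T_{W_F}\tc{d}$), which the paper asserts without proof.
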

\begin{proof}
See \cref{prop:marg_cn:proof}.
\end{proof}

\section{Filters And Fourier Transform}
\label{sec:fourier}

In this section, we introduce the Fourier transform for simplicial complex signals and complexon signals.

Given simplicial complex signal $(F,\bs_F)$, a shift operator $\bA$ can be used to manipulate the signal $\bs'=\bA\bs_F$. Furthermore, $\bA$ can generate a \gls{LSI} filter $\bH$:
\begin{align*}
\bs'=\bH\bs_F=\bU_{\bA}\Lambda_H\bU_{\bA}^T\bs_F
\end{align*}
where columns of $\bU_{\bA}$ are eigenvectors of $\bA$. For the case of simplicial complexes, we hereby use the raised adjacency matrix $\bN$ as the shift operator. 
Then the Fourier transform of simplicial complex signals can be defined as follows.
\begin{Definition}\label{def:sc_fourier}
Given $D$-dimensional simplicial complex $F$ and its corresponding signal $\bs_F$, for any $1\leq d\leq D$, define its $d-$Fourier transform as
\begin{align*}
\hat{\bs_F}\tc{d}=(\bU\tc{d})^T\bs_F,
\end{align*}
where $\bU\tc{d}$ is derived from diagonalization of $d-$raised adjacency matrix $\bN\tc{d}=\bU\tc{d}\Lambda\tc{d}(\bU\tc{d})^T$.

\end{Definition}
With the definition of simplicial complex signal Fourier transform founded, the complexon version can be introduced\cite{spec}.
\begin{Definition}\label{def:cn_fourier}
Given $D$-dimensional complexon signal $(W,\bX)$, for any $1\leq d\leq D$, assume that \gls{CSO} is $T_W\tc{d}$, define $d-$Fourier transform of $\bX$ as $\hat{\bX}\tc{d}$:
\begin{align*}
[\hat{\bX}\tc{d}]_m=\int_0^1\bX(t)\varphi_m\tc{d}(t)\ud t,
\end{align*}
where $m$ is a non-zero integer index and $\varphi_m\tc{d}$ is the eigenfunction of the eigendecomposition of $T_W\tc{d}$: $\set{(\lambda_i\tc{d},\varphi_i\tc{d})}_{i\in\bbZ\backslash\set{0}}$.
\end{Definition}

\section{Convergence}\label{sec:convergence}

In this section, we study the convergence of complexons. 

Firstly, we prove that if a sequence of simplicial complexes converges to a complexon, then the eigenvalues of their induced \gls{CSO}s also converge. 




\begin{Theorem}\label{thm:adj_induce_GSO}
Given $1\leq d\leq D$, suppose the $D$-dimensional simplicial complex sequence
$F_k\to W$ under the cut distance of any dimension.
For each $F_k$, suppose the eigenvalues of $T_{W_{F_k}}\tc{d}$ are $\set{\lambda_{i}\tc{d,k}\mid i\in\bbZ\backslash\set{0}}$ and the eigenvalues of $T_W\tc{d}$ are $\set{\lambda_{i}\tc{d}\mid i\in\bbZ\backslash\set{0}}$.
Then, for any $i\in\bbZ\backslash\{0\}$,
\begin{align*}
\lim_{k\to\infty}\lambda_{i}\tc{d,k}=\lambda_{i}\tc{d}.
\end{align*}
\end{Theorem}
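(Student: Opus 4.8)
The strategy is to reduce the claim to the already-established eigenvalue convergence for graphons, Theorem \ref{thm:eigval_conv_graphon}. The key observation is that Definition \ref{mgn} assigns to each complexon $W$ (and each induced complexon $W_{F_k}$) a genuine graphon, namely the marginal $\overline{W}\tc{d}$ (resp. $\overline{W}_{F_k}\tc{d}$), and the $d$-dimensional CSO is by construction (equation \eqref{eq:1}) nothing but the graphon shift operator of that marginal: $T_W\tc{d} = T_{\overline{W}\tc{d}}$ and $T_{W_{F_k}}\tc{d} = T_{\overline{W}_{F_k}\tc{d}}$. So the eigenvalues $\lambda_i\tc{d,k}$ and $\lambda_i\tc{d}$ are literally the graphon eigenvalues of $\overline{W}_{F_k}\tc{d}$ and $\overline{W}\tc{d}$. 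It therefore suffices to show that $\overline{W}_{F_k}\tc{d} \to \overline{W}\tc{d}$ in the graphon cut metric $\delta_\square$, and then invoke Theorem \ref{thm:eigval_conv_graphon} directly.

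First I would record the identification $T_W\tc{d} = T_{\overline{W}\tc{d}}$ explicitly, which is immediate from \eqref{eq:1} and the definition of the graphon shift operator; this makes the eigenvalue indices match. Next, the main step: establish that marginalization is non-expansive (indeed contractive in the appropriate sense) with respect to the cut metric, i.e.
\begin{align*}
\delta_\square\!\left(\overline{W}_1\tc{d},\,\overline{W}_2\tc{d}\right) \;\leq\; \delta_{\square,d}\!\left(W_1,\,W_2\right)
\end{align*}
for any two $D$-dimensional complexons $W_1, W_2$. The idea is that for a fixed measure-preserving $\phi$, and fixed Borel sets $X, Y \subseteq [0,1]$, integrating the defining difference of $\overline{W}_1\tc{d} - (\overline{W}_2\tc{d})^\phi$ over $X \times Y$ equals integrating $W_1\tc{d} - (W_2\tc{d})^\phi$ over $X \times Y \times [0,1]^{d-1}$ (using that $\phi$ is measure-preserving so the marginalizing variables can be transported by $\phi$ as well). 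This last quantity is of the form $\abs{H_n - H^\phi}$ appearing in the definition of $\delta_{\square,d}$ with the remaining sets taken to be all of $[0,1]$; taking sup over $X,Y$ and inf over $\phi$ gives the bound. Applying this with $W_1 = W_{F_k}$, $W_2 = W$ yields $\delta_\square(\overline{W}_{F_k}\tc{d}, \overline{W}\tc{d}) \leq \delta_{\square,d}(W_{F_k}, W) \to 0$ by hypothesis (the convergence holds for the cut distance of any dimension, in particular dimension $d$).

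Finally, with $\overline{W}_{F_k}\tc{d} \to \overline{W}\tc{d}$ in $\delta_\square$ and the identification of operators and eigenvalues above, Theorem \ref{thm:eigval_conv_graphon} gives $\lambda_i\tc{d,k} \to \lambda_i\tc{d}$ for every $i \in \bbZ\backslash\{0\}$. One caveat to handle carefully: Theorem \ref{thm:eigval_conv_graphon} is stated for a sequence of \emph{graphs} $G_k$ converging to a graphon, whereas here $\overline{W}_{F_k}\tc{d}$ is a step-function graphon but its entries $\bN\tc{d}_{ij}$ are generally not $0/1$-valued, so it is an induced graphon of a \emph{weighted} graph rather than a simple graph. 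I expect the graphon eigenvalue-convergence argument (Weyl-type perturbation bounds controlled by the cut norm) to go through verbatim for weighted induced graphons — this is really a statement about $\delta_\square$-convergence of graphons, not about graphs — but I would either cite the general graphon version from \cite{spec} or \cite{Lov:12}, or remark that the proof of Theorem \ref{thm:eigval_conv_graphon} only uses $\delta_\square$-convergence of the associated graphons. That bookkeeping point, rather than any genuine analytic difficulty, is the main obstacle; the mathematical content is entirely in the contraction inequality for the cut metric under marginalization, which is a short Fubini-type argument.
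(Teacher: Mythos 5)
Your proposal is correct and follows essentially the same route as the paper: the core step in both is the contraction inequality $\delta_{\square}(\overline{W}_{F_k}\tc{d},\overline{W}\tc{d})\leq\delta_{\square,d}(W_{F_k},W)$, proved by the same Fubini-type argument (transporting the marginalized variables by the measure-preserving map and recognizing $X\times Y\times[0,1]^{d-1}$ as a special choice of the sets in the higher-dimensional cut distance), followed by an appeal to graphon eigenvalue convergence. Your bookkeeping caveat is also exactly how the paper handles it --- it invokes the general graphon-level eigenvalue convergence result of \cite[Theorem 6.7]{gn2} rather than the graph-sequence statement of \cref{thm:eigval_conv_graphon}.
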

\begin{proof}
See \cref{thm:adj_induce_GSO:proof}.
\end{proof}

\cref{thm:adj_induce_GSO} implies transferability of simplicial complex signal processing. However, to further discuss the convergence of the Fourier transform, more limitations are needed.
\begin{Definition}\label{def:signal_conv}
Define simplicial complex signal sequence $\set{(F_n,\bX_{F_n})}_{n=1}^{\infty}$ converges to complexon signal $(W,\bX)$ if there exists a sequence of \emph{admissible node permutation} (adapted from \cite[Definition 1]{RuiChaRib:J21}) $\{\pi_n\}$ such that $\ud_{\square,d}(W_{\pi_n(F_n)},W)\to0$ for any dimension $d$ and $\Vert \pi_n(\bX_n)-\bX \Vert_2\to 0$. Here $\bX_n$ is the induced complexon signal of $\bX_{F_n}$.
\end{Definition}
\begin{Definition}\label{def:non_derogatory}
$D$-dimensional complexon $W$ is fully non-derogatory if all non-zero eigenvalues of $T_W\tc{d}$ have multiplicity one for any dimension $d$.
\end{Definition}
\begin{Definition}\label{def:bandlimited}
$D$-dimensional complexon signal $(W,\bX)$ is $c-$bandlimited at dimension $d$ if $[\hat{\bX}\tc{d}]_m=0$ for any $\abs{\lambda_m\tc{d}}<c$. Furthermore, it is $c-$bandlimited uniformly if $[\hat{\bX}\tc{d}]_m=0$ for any $\abs{\lambda_m\tc{d}}<c$ at any dimension $d$.
\end{Definition}

\begin{Theorem}[Convergence of Fourier transform]\label{thm:conv_cft}
Given that a sequence of $D$-dimensional simplicial complex signal $(F_n,\bs_{F_n})$ converges to complexon signal $(W,\bX)$ under admissible permutation $\{\pi_n\}$ where
\begin{itemize}
\item $W$ is fully non-derogatory;
\item $\bX$ is $c-$bandlimited uniformly.
\end{itemize}
Denote $\bX_n$ as the induced complexon signal of $\bs_{F_n}$. Then at any dimension $d$, $\hat{\pi_n(\bX_n)}\tc{d}\to\hat{\bX}\tc{d}$ pointwise.
\end{Theorem}
\begin{proof}
See \cref{thm:conv_cft:proof}.
\end{proof}

\section{Experiments}
\label{sec:exp}

\subsection{Synthetic Experiment for Eigenvalue Convergence}

To corroborate \cref{thm:adj_induce_GSO}, we generate a synthetic example of a 2-dimensional complexon $W: \bigsqcup_{d=0}^{2}[0,1]^{d+1}\to [0,1]$:
\begin{align*}
W\tc{0}(x)&\equiv1, \\
W\tc{1}(x,y)&\equiv1, \\
W\tc{2}(x,y,z)&=\frac{x+y+z}{3}.
\end{align*}
Given node number $n$ (in our experimental setting, $n=6,7,\ldots,149$), a sampled simplicial complex $F_n$ is constructed as follows. First, draw $n$ sample points $\set{x_1,x_2,\cdots,x_n}$ \gls{iid} from the uniform distribution $\dist{Unif}[0,1]$. Then, create its node set $F_n\tc{0}=\set{v_1,v_2,\cdots,v_n}$ and edge set $F_n\tc{1}=\set{(v_i,v_j)\mid i,j\in\set{1,2,\cdots,n},i\neq j}$. For each 2-dimensional simplex $(v_i,v_j,v_k)$, the probability such that $(v_i,v_j,v_k)\in F_n\tc{2}$ is $W\tc{2}(x_i,x_j,x_k)$. According to \cite{LDSC}, $F_n\to W$ under the cut distance of any dimension.

We investigate the eigenvalue convergence behavior of \gls{CSO} $T_{W_{F_n}}\tc{2}$. For $W\tc{2}$, its marginal complexon is $\overline{W}\tc{2}(x,y)=\frac{x}{3}+\frac{y}{3}+\frac{1}{6}$ according to \cref{def:cn_shift}.
The only two non-zero eigenvalues for \gls{CSO} $T_W\tc{2}$ are $\lambda_1=\frac{1}{4}+\frac{\sqrt{93}}{36}$, $\lambda_{-1}=\frac{1}{4}-\frac{\sqrt{93}}{36}$. So we anticipate that for sequence $T_{W_{F_n}}\tc{2}$, $\lambda_{1}\tc{2,n}\to\lambda_1$, $\lambda_{-1}\tc{2,n}\to\lambda_{-1}$, and $\lambda_{l}\tc{2,n}\to0$ for any $l\in\bbZ\backslash\set{0,1,-1}$. \cref{fig:eig} shows the convergence of the eigenvalues for $\lambda_{i}\tc{2,n}$ for $i=1,2,-1,-2$. This experiment result verifies \cref{thm:adj_induce_GSO} and indicates the transferability of simplicial complex sequences converging to a complexon.

\begin{figure}[tb]
\centering
\includegraphics[width=.85\columnwidth,trim=2cm 0 2cm 0, clip]{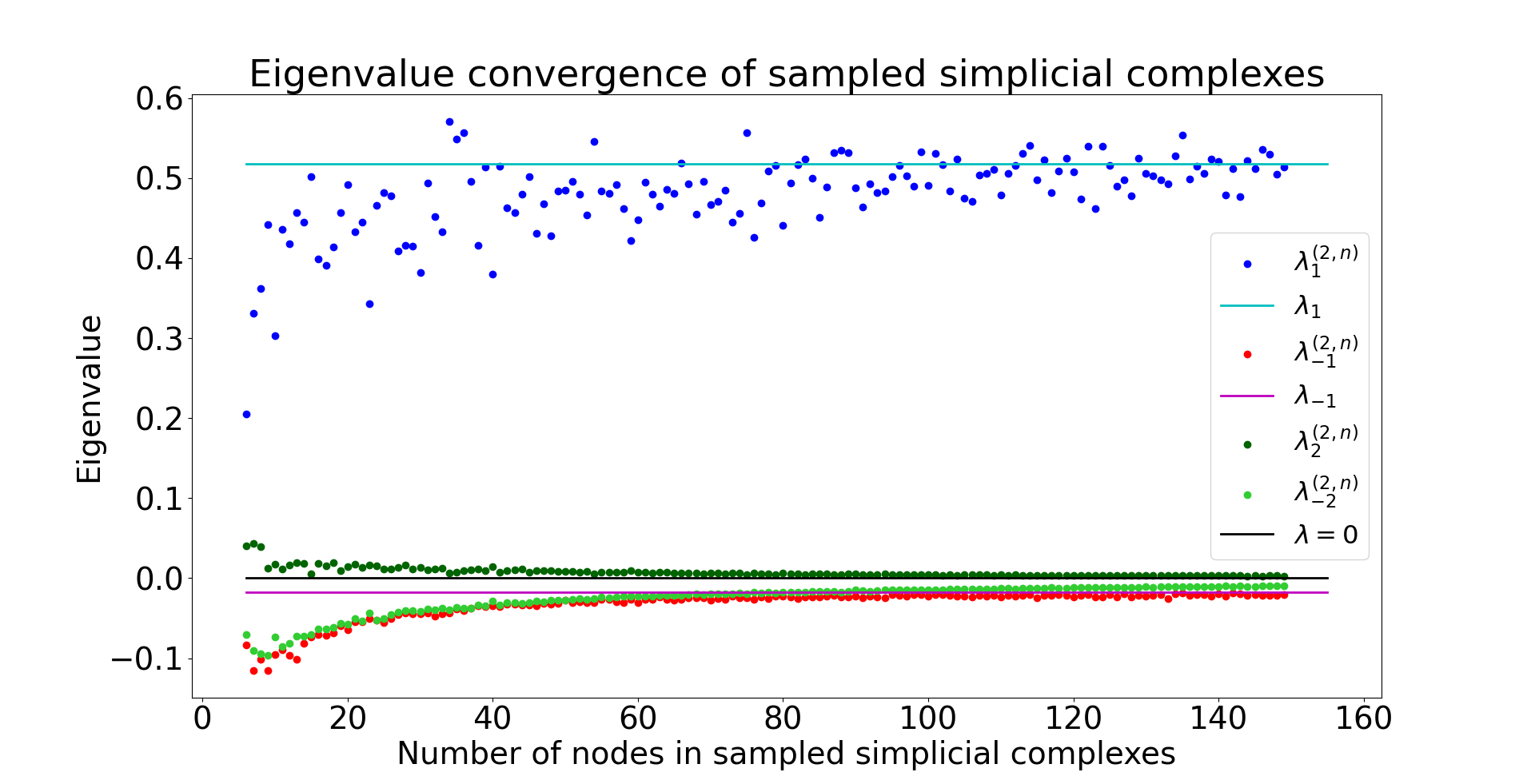}
\caption{Eigenvalue convergence of $\lambda_i\tc{2,n}$ for $i=1,2,-1,-2$.}
\label{fig:eig}
\includegraphics[width=.85\columnwidth,trim=.5cm 0 .5cm 0, clip]{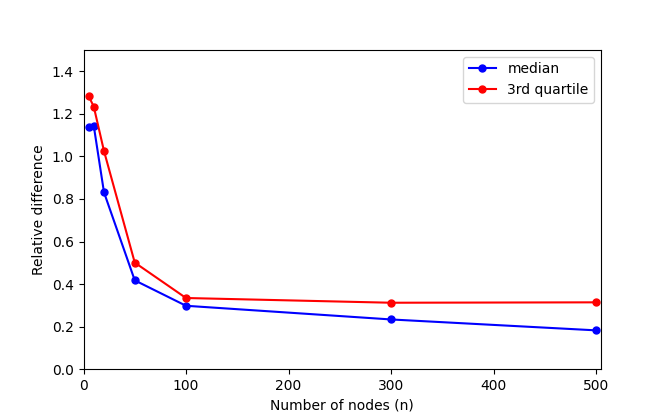}
\caption{Convergence of relative difference of Fourier transform of signals for $n=5,10,20,50,100,300,500$. Each $n$ is repeated with 50 trials.}
\label{fig:fourier}
\end{figure}

\subsection{Experiment for Fourier Transform Convergence}

To corroborate \cref{thm:conv_cft}, we adopt a synthetic model, inspired by \cite[Experiment S2]{RuiChaRib:J21}, to illustrate the convergence of the Fourier transform. 

Consider a field produced by a single source in Euclidean space $\bbR^2$. It can model a sound intensity field produced by a noise source or environmental defection produced by a pollution source. To characterize, an $n-$node simplicial complex sensor network $F$ is placed in the field with vertices $\set{v_i}_{i=1}^{n}$ and coordinate $\set{(x_i,y_i)}_{i=1}^{n}\sim\dist{Unif}[-1,1]^2$, \gls{iid}. Define normalized vertical distance $d_i=\frac{y_i+1}{2}$ to characterize the relationship between sensors. We then create $2$-dimensional simplicial complex $F$ with $F\tc{0}=\set{v_1,v_2,...,v_{n}}$, $F\tc{1}$ fully-connected, and
\begin{align*}
p(\set{v_i,v_j,v_k}\in F\tc{2})=\mathrm{exp}(-\beta\overline{d}),
\end{align*}
where $\beta>0$ is a constant and $\overline{d}=\max\set{\abs{d_i-d_j},\abs{d_j-d_k},\abs{d_k-d_i}}$.

At each sensor node $v_i$ with coordinate $(x_i,y_i)$, we assume that its captured signal follows normal distribution $[\bs_{F}]_i\sim\calN(y_i,\sigma^2)$, conforming the diffusion of source. Induced signal satisfies $\bX_{F}\in L^2([0,1])$ and its 2-Fourier transform is $\hat{\bX_{F}}\tc{d}$. We choose $\beta=0.5$ and $\sigma=0.2$ in our setting.

To illustrate Fourier transform convergence, we sample another series of simplicial complexes $E$ in the same way, such that $F$ and $E$ have the same number of nodes $n$. The coordinates, simplices, and signals are re-sampled. After getting $\bX_{E}$ and Fourier transform $\hat{\bX_{E}}\tc{d}$, we compare $\hat{\bX_{E}}\tc{d}$ and $\hat{\bX_{F}}\tc{d}$, and then find a node permutation $\pi$ such that $\Vert\hat{X_{E}}\tc{d}-\hat{X}_{\pi(F)}\tc{d}\Vert_2$ is minimized. The aforementioned process is repeated 50 times and the relative difference $\frac{\Vert\hat{\bX_{E}}\tc{d}-\hat{\bX}_{\pi(F)}\tc{d}\Vert_2}{\Vert\hat{\bX_{E}}\tc{d}\Vert_2}$ is calculated. The median and 3rd quartile of these 50 distances are acquired. When $n$ increases, if the median and 3rd quartile converge to zero, the convergence of the Fourier transform is then illustrated. We run tests for $n=5,10,20,50,100,300,500$ and obtain the convergence tendency of the relative difference of Fourier transform of signals in \cref{fig:fourier}. This experiment demonstrates the convergence behavior of the Fourier transform empirically, which aligns with \cref{thm:conv_cft}.

\section{Conclusion}
\label{sec:conclusion}

In this work, we proposed a type of complexon shift operator based on marginal complexons and found raised adjacency matrix as its corresponding shift operator for simplicial complexes. We proved the eigenvalue convergence of \gls{CSO} and the convergence of the Fourier transform of complexon signals. These two conclusions are further supported by numerical experiments on sampled simplicial complex sequences, with models close to real-life applications. The convergence of \gls{CSO}, its eigenvalue, and Fourier transform of complexon signals implies the transferability of \gls{SCSP} on vertex signals, which suggests the potential application of complexon signal processing on large or dynamic simplicial complex networks.

\section{Acknowledgements}
This research is supported by the Singapore Ministry of Education Academic Research Fund Tier 2 grant MOE-T2EP20220-0002.

\appendices

\section{Proof of \cref{prop:adj_induce_GSO}}\label[Appendix]{prop:adj_induce_GSO:proof}

For $d=1$, $\overline{W}_F\tc{1}=W_F\tc{1}$, which is identical to the graphon of the 1-dimensional skeleton of $F$. The matrix $\bN\tc{1}$ is the adjacency matrix of the 1-dimensional skeleton of $F$. So the proposition holds by relating the graph and its induced graphon.

For $d\geq 2$, first, we consider the $d$-dimensional component of complexon $W$. That is, $W\tc{d}(x,y,z_1,z_2,\dots ,z_{d-1})$. According to \cref{def:cn_shift}, 
\begin{align*}
\overline{W}_F\tc{d}(x,y)=\int_{[0,1]^{d-1}}W_F(x,y,z_1,z_2,\dots ,z_{d-1})\prod_{i=1}^{d-1}\ud z_i.
\end{align*}

To calculate the integral, we should first split the integral intervals:
\begin{align*}
\int_{[0,1]^{d-1}}\prod_{i=1}^{d-1}\ud z_i=\sum_{k_1,k_2,\dots ,k_{d-1}=1}^n\int_{I_{k_1}\times I_{k_2}\times\dots \times I_{k_{d-1}}}\prod_{i=1}^{d-1}\ud z_i.
\end{align*}
Assume $x\in I_i$, $y\in I_j$. We are going to prove $\overline{W}_F\tc{d}(x,y)=N\tc{d}_{ij}$. 
According to the definition, we have 
\begin{align*}
W(x,y,z_1,z_2,\dots ,z_{d-1})=1
\end{align*}
if
\begin{align*}
(v_i,v_j,v_{k_1},v_{k_2},\dots ,v_{k_{d-1}})\in K\tc{d}
\end{align*}
and 0 otherwise. Since all $k-$entries range from $1$ to $n$, we are counting all $d$-dimensional simplices in $K\tc{d}$ containing vertices $v_i$ 
and $v_j$. And for the hyper-volume of each integral interval, it should be $(\frac{1}{n})^{d-1}$. 
So by \cref{radj}, we have
\begin{align*}
\overline{W}_F\tc{d}(x,y) &= \frac{\abs*{\set{\calV(\sigma)\in K\tc{d-2}\mid\set{v_i,v_j}\bigcup\calV(\sigma)\in K\tc{d}}}}{n^{d-1}}\\
&= \bN\tc{d}_{ij}.\\
\end{align*}

\section{Proof of \cref{prop:marg_cn}}\label[Appendix]{prop:marg_cn:proof}
Note that since a marginal complexon is a graphon, the $d$-dimensional \gls{CSO} has the same properties as
the graphon shift operator: it is linear, self-adjoint, and compact\cite{RuiChaRib:J21}. 

To prove the first two conclusions, we only need to verify that $T_{W_F}\tc{d}\varphi_i=\frac{1}{n}\lambda_i\tc{d}\varphi_i$ holds for any $i\in\calL$. To do this, we set up standard $n$-equipartition $\set{I_1,I_2,\dots ,I_n}$. In this case, for any $x\in I_j$, $j\in\set{1,2,\dots ,n}$, we have
\begin{align*}
T_{W_F}\tc{d}\varphi_i(x) &= \int_0^1\overline{W}_F\tc{d}(x,y)\varphi_i(y)\ud y\\
&= \sum_{k=1}^{n}\int_{I_k}\overline{W}_F\tc{d}(x,y)\varphi_i(y)\ud y\\
&= \sqrt{n}\sum_{k=1}^{n}\int_{I_k}N_{jk}\tc{d}(v_i\tc{d})_k\ud y\\
&= \sqrt{n}\sum_{k=1}^{n}(\int_{I_k}\ud y)N_{jk}\tc{d}(v_i\tc{d})_k\\
&= \sqrt{n}\cdot\frac{1}{n}(N\tc{d}v_i\tc{d})_j\\
&= \frac{1}{n}(\lambda_i\tc{d}\sqrt{n}v_i\tc{d})_j\\
&= \frac{1}{n}\lambda_i\tc{d}\varphi_i(x).
\end{align*}

For the third conclusion, we need to prove $\langle\varphi_i,\varphi_j\rangle=\delta_{ij}$,
for any $i,j\in\calL$, where $\delta$ is the Kronecker delta. Given that $N\tc{d}$ is a real
symmetric matrix, $\langle v_i,v_j\rangle=\sum_{k=1}^{n}(v_i)_k(v_j)_k=\delta_{ij}$, we have
\begin{align*}
\langle\varphi_i,\varphi_j\rangle &= \int_0^1\varphi_i(x)\varphi_j(x)\ud x \\
&= \sum_{k=1}^n\int_{I_k}\varphi_i(x)\varphi_j(x)\ud x \\
&= \sum_{k=1}^n(\int_{I_k}\ud x)\sqrt{n}(v_i)_k\sqrt{n}(v_j)_k \\
&= \sum_{k=1}^{n}(v_i)_k(v_j)_k \\
&= \delta_{ij}, \\
\end{align*}
which concludes the proof.

\section{Proof of \cref{thm:adj_induce_GSO}}\label[Appendix]{thm:adj_induce_GSO:proof}

We start with a preliminary lemma. 

\begin{Lemma}\label{lma:cut_dist_control}
Given two $D$-dimensional complexon $W_1$, $W_2$, with their $d$-dimensional components $W_1\tc{d}$, $W_2\tc{d}$, $1\leq d\leq D$, it holds that
\begin{align*}
d_{\square}(\overline{W_1}\tc{d},\overline{W_2}\tc{d})&\leq d_{\square,d}(W_1,W_2)\\
\delta_{\square}(\overline{W}_{1}\tc{d},\overline{W}_2\tc{d})&\leq\delta_{\square,d}(W_1,W_2).
\end{align*}
\end{Lemma}
\begin{proof}
For the first inequality,

\begin{align*}
d_{\square}(\overline{W_1}\tc{d},\overline{W_2}\tc{d}) = \sup_{X,Y\subseteq\calB[0,1]}\abs*{J_1-J_2},
\end{align*}
where
\begin{align*}
J_1&\triangleq\int_{X\times Y}\overline{W_1}\tc{d}(x,y)\ud x\ud y,\\
J_2&\triangleq\int_{X\times Y}\overline{W_2}\tc{d}(x,y)\ud x\ud y.
\end{align*}
By definition of marginal complexon, we have
\begin{align*}
J_1 &=\int_{\Omega_1}W_1\tc{d}(x,y,z_1,\dots,z_{d-1})\ud x\ud y\ud z;\\
J_2 &=\int_{\Omega_1}W_2\tc{d}(x,y,z_1,\dots,z_{d-1})\ud x\ud y\ud z.
\end{align*}
where $\ud z=\prod_{i=1}^{d-1}\ud z_i$, $\Omega_1=X\times Y\times[0,1]^{d-1}$.

Denote
\begin{align*}
d_{\square,d}(W_1,W_2)=\sup_{X,Y,Z_1,\dots,Z_{d-1}\subseteq\calB[0,1]}\abs*{J_1'-J_2'},
\end{align*}
where 
\begin{align*}
J_1' &= \int_{\Omega_2}W_1\tc{d}(x,y,z_1,\dots,z_{d-1})\ud x\ud y\ud z ;\\
J_2' &= \int_{\Omega_2}W_2\tc{d}(x,y,z_1,\dots,z_{d-1})\ud x\ud y\ud z,
\end{align*}
and $\ud z=\prod_{i=1}^{d-1}\ud z_i$, $\Omega_2=X\times Y\times Z_1\dots \times Z_{d-1}$.

Note that $\Omega_1$ is a special condition of $\Omega_2$ if we let $Z_1, Z_2,\dots,Z_{d-1}=[0,1]$. So we have
\begin{align*}
\sup_{X,Y\subseteq\calB[0,1]}\abs*{J_1-J_2}\leq\sup_{X,Y,Z_1,\dots ,Z_{d-1}\subseteq\calB[0,1]}\abs*{J_1'-J_2'}.
\end{align*}
That is, 
\begin{align*}
d_{\square}(\overline{W_1}\tc{d},\overline{W_2}\tc{d})\leq d_{\square,d}(W_1,W_2).
\end{align*}

For the second inequality, assume $\phi\in\Phi$,
\begin{align*}
\delta_{\square}(\overline{W}_1\tc{d},\overline{W}_2\tc{d}) = \inf_\phi\sup_{X,Y\subseteq\calB[0,1]}\abs*{H_1-H_2},
\end{align*}
where
\begin{align*}
H_1&\triangleq \int_{X\times Y}\overline{W}_1\tc{d}(x,y)\ud x\ud y,\\
H_2&\triangleq\int_{X\times Y}\overline{W}_2\tc{d}(\phi(x),\phi(y))\ud x\ud y.
\end{align*}
By definition of marginal complexon, we have
\begin{align*}
H_1 &=\int_{\Omega_1}W_{1}\tc{d}(x,y,z_1,\dots,z_{d-1})\ud x\ud y\ud z;\\
H_2 &=\int_{\Omega_1}W_2\tc{d}(\phi(x),\phi(y),z_1,\dots,z_{d-1})\ud x\ud y\ud z.
\end{align*}
where $\ud z=\prod_{i=1}^{d-1}\ud z_i$, $\Omega_1=X\times Y\times[0,1]^{d-1}$.

For any measurable functions $f:[0,1]\to\bbR$, we have $\int_0^1f(\phi(x))\ud x=\int_0^1f(x)\ud x$\cite{measure}. Therefore we have
\begin{align*}
H_2=\int_{X\times Y\times[0,1]^{d-1}}(W_2\tc{d})^\phi(x,y,z_1,\dots,z_{d-1})\ud x\ud y\ud z
\end{align*}
by substituting all $z_i$ with $\phi(z_i)$ without changing the value of the integral.

Compare the resulting term with the corresponding part of $\delta_{\square,d}(W_1, W_2)$. Denote
\begin{align*}
\delta_{\square,d}(W_1,W_2)=\inf_{\phi}\sup_{X,Y,Z_1,\dots ,Z_{d-1}\subseteq\calB[0,1]}\abs*{H_1'-H_2'},
\end{align*}
where 
\begin{align*}
H_1' &= \int_{\Omega_2}W_1\tc{d}(x,y,z_1,\dots,z_{d-1})\ud x\ud y\ud z ;\\
H_2' &= \int_{\Omega_2}(W_2\tc{d})^\phi(x,y,z_1,\dots,z_{d-1})\ud x\ud y\ud z,
\end{align*}
and $\ud z=\prod_{i=1}^{d-1}\ud z_i$, $\Omega_2=X\times Y\times Z_1\dots \times Z_{d-1}$.

Note that $\Omega_1$ is a special condition of $\Omega_2$ if we let $Z_1, Z_2,\dots,Z_{d-1}=[0,1]$. So for any $\phi\in\Phi$, we have
\begin{align*}
\sup_{X,Y\subseteq\calB[0,1]}\abs*{H_1-H_2}\leq\sup_{X,Y,Z_1,\dots ,Z_{d-1}\subseteq\calB[0,1]}\abs*{H_1'-H_2'}.
\end{align*}
Taking the infimum of $\phi$ and then we get 
\begin{align*}
\delta_{\square}(\overline{W}_1\tc{d},\overline{W}_2\tc{d})\leq\delta_{\square,d}(W_1,W_2).
\end{align*}
\end{proof}

We next prove \cref{thm:adj_induce_GSO}. According to \cref{lma:cut_dist_control},
\begin{align*}
\delta_{\square}(\overline{W}_{F_k}\tc{d},\overline{W}\tc{d})\leq\delta_{\square,d}(W_{F_k},W).
\end{align*}
So the convergence of $F_k\to W$ in complexon cut distance of any dimension implies $\overline{W}_{F_k}\tc{d}\to\overline{W}\tc{d}$ in graphon cut distance. Then by directly applying \cite[Theorem 6.7]{gn2} we obtain the desired result.

\section{Proof of \cref{thm:conv_cft}}\label[Appendix]{thm:conv_cft:proof}
Before the proof, we hereby state the following corollary, which is a direct result from \cref{lma:cut_dist_control}.
\begin{Corollary}\label{cor:cn_to_gn}
If $D$-dimensional simplicial complex sequence $F_k\to W$ under the cut distance of any dimension, then for any dimension $1\leq d\leq D$, $G_k\tc{d}\to\overline{W}\tc{d}$ converges in graphon cut distance if $G_k\tc{d}$ is a weighted graph with adjacency matrix $\bN_k\tc{d}$. 
\end{Corollary}

With the help of \cref{cor:cn_to_gn}, we can link properties of marginal complexons to graphons induced by weighted graphs.
We now prove \cref{thm:conv_cft}.
Denote $W_{\pi_n(F_n)}$ as $W_n$, $\pi_n(\bX_n)$ as $X_n$, and $\bX$ as $X$. Then signal convergence is equivalent to $d_{\square,d}(W_n,W)\to0$ and $\Vert X_n-X\Vert_2\to 0$. We now need to prove $\hat{X_n}\tc{d}\to\hat{X}\tc{d}$ pointwise.

Since $d_{\square,d}(W_n,W)\to0$ is given by the signal convergence, using \cref{lma:cut_dist_control}, $d_{\square}(\overline{W_n}\tc{d},\overline{W}\tc{d})\to0$ holds. 

Assume that $d$-dimensional component of $\pi_n(F_n)$ induces $d-$raised adjacency matrix $\bN\tc{d}$, then this adjacency matrix can induce a weighted graph $G_n\tc{d}$, which can further induce graphon $W_{G_n\tc{d}}$. Then $\overline{W_n}\tc{d}=W_{G_n\tc{d}}$ according to the definition of marginal complexon. Moreover, according to \cref{def:cn_shift}, $T_{W_n}\tc{d}$ is induced by $\overline{W_n}\tc{d}$, and we can induce graphon shift operator $T_{G_n\tc{d}}$ by $W_{G_n\tc{d}}$. Two operators are induced by the same function in the same way, so they are identical. 

Denote eigenpairs of $T_{W_n}\tc{d}$ (or equivalently, $T_{G_n\tc{d}}$) as $\set{(\lambda_{i}\tc{d,n},\varphi_i\tc{d,n})}_{i\in\bbZ\backslash\set{0}}$ and eigenpairs of $T_{W}\tc{d}$ as $\set{(\lambda_{i}\tc{d},\varphi_i\tc{d})}_{i\in\bbZ\backslash\set{0}}$. Thus, we can adapt \cite[Lemma 3]{RuiChaRib:J21}: define $\calC=\set{i\in\bbZ\backslash\set{0}\mid\abs*{\lambda_i\tc{d}}\geq c}$.

Case I. For $i\in\calC$, for any $\varepsilon>0$, there exists $N\in\bbN_{+}$ such that for any $n>N$, 
\begin{align*}
\Vert\varphi_i\tc{d,n}-\varphi_i\tc{d}\Vert_2<\frac{\varepsilon}{2\Vert X\Vert_2},\\
\Vert X_n-X\Vert_2<\frac{\varepsilon}{2}.
\end{align*}
Now denote the inner product in $L^2[0,1]$ Hilbert space as $\langle f,g\rangle=\int_0^1f(t)g(t)\ud t$. Then according to \cref{def:cn_fourier}, for any $i\in\calC$,
\begin{align*}
&\quad\abs{[\hat{X_n}\tc{d}]_i-[\hat{X}\tc{d}]_i}\\
&=\abs{\langle X_n,\varphi_i\tc{d,n}\rangle-\langle X,\varphi_i\tc{d}\rangle}\\
&=\abs{\langle X_n-X,\varphi_i\tc{d,n}\rangle-\langle X,\varphi_i\tc{d,n}-\varphi_i\tc{d}\rangle} \\
&\leq \Vert X_n-X\Vert_2\Vert\varphi_i\tc{d,n}\Vert_2+\Vert X\Vert_2\Vert\varphi_i\tc{d,n}-\varphi_i\tc{d}\Vert_2 \\
&<\frac{\varepsilon}{2}\cdot1+\Vert X\Vert_2\cdot\frac{\varepsilon}{2\Vert X\Vert_2} \\
&=\varepsilon.
\end{align*}
Here, $\Vert\varphi_i\tc{d,n}\Vert_2=1$ because it is an eigenfunction in an orthonormal basis.

Case II. For $i\notin\calC$, note that $f$ (or, $X$) is $c-$bandlimited uniformly (so specifically $c-$bandlimited at dimension $d$), $[\hat{X}\tc{d}]_i=\langle X,\varphi_i\tc{d}\rangle=0$. Since this holds for any $i\notin\calC$, $X$ is then perpendicular to the subspace $\calS=\mathrm{span}\set{\varphi_m\tc{d}}_{m\notin\calC}$. Since \cite[Lemma 3]{RuiChaRib:J21} shows that $\varphi_i\tc{d,n}\to\psi_i\tc{d}\in\calS$ weakly, we have $\langle X,\psi_i\tc{d}\rangle=0$. Then, for any $\varepsilon>0$, there exists $N\in\bbN_{+}$ such that for any $n>N$,
\begin{align*}
\abs{\langle\varphi_i\tc{d,n}-\psi_i\tc{d},X\rangle}<\frac{\varepsilon}{2}\\
\Vert X_n-X\Vert_2<\frac{\varepsilon}{2}
\end{align*}
Then, note that $\langle X,\varphi_i\tc{d}\rangle=\langle X,\psi_i\tc{d}\rangle=0$,
\begin{align*}
&\quad\abs{[\hat{X_n}\tc{d}]_i-[\hat{X}\tc{d}]_i}\\
&=\abs{\langle X_n,\varphi_i\tc{d,n}\rangle-\langle X,\varphi_i\tc{d}\rangle}\\
&=\abs{\langle X_n-X,\varphi_i\tc{d,n}\rangle+\langle X,\varphi_i\tc{d,n}-\psi_i\tc{d}\rangle}\\
&\leq \Vert X_n-X\Vert_2\cdot\Vert\varphi_i\tc{d,n}\Vert_2+\abs{\langle X,\varphi_i\tc{d,n}-\psi_i\tc{d}\rangle}\\
&<\frac{\varepsilon}{2}\cdot1+\frac{\varepsilon}{2}\\
&=\varepsilon.
\end{align*}
To sum up, for any $i\notin\bbZ\backslash\set{0}$, $\abs{[\hat{X_n}\tc{d}]_i-[\hat{X}\tc{d}]_i}\to 0$. That is, $\hat{X_n}\tc{d}\to\hat{X}\tc{d}$ pointwise.

\begin{Remark}
Given $D$-dimensional simplicial complex $F$, its corresponding signal $\bs_F$, for any $1\leq d\leq D$, and its $d-$Fourier transform $\hat{\bs_F}\tc{d}=(\bU\tc{d})^T\bs_F$, we can also define the inverse $d-$Fourier transform as $\bs_F=\bU\tc{d}\hat{\bs_F}\tc{d}$. Similarly, given $D$-dimensional complexon signal $(W,\bX)$, for any $1\leq d\leq D$, assume that \gls{CSO} is $T_W\tc{d}$, and the $d-$Fourier transform of $\bX$ as $\hat{\bX}\tc{d}$ is $[\hat{\bX}\tc{d}]_m=\int_0^1\bX(t)\varphi_m\tc{d}(t)\ud t$, then the inverse $d-$Fourier transform can be defined as $\bX=\sum_{i\in\bbZ\backslash\set{0}}[\hat{\bX}\tc{d}]_i\varphi_i\tc{d}$.

Also, given filter $\bH(\bN\tc{d})=\sum_{i=1}^{k}h_i(\bN\tc{d})^i$ and diagonalization $\bN\tc{d}=\bU\tc{d}\Lambda\tc{d}(\bU\tc{d})^T$, the spectral representation of filter can be derived as $\hat{\bH}(\bN\tc{d})=\sum_{i=1}^{k}h_i(\Lambda\tc{d})^i$. If $\bs'=\bH(\bN\tc{d})\bs_F$, then it satisfies that $\hat{\bs'}\tc{d}=\hat{\bH}(\bN\tc{d})(\bs_F)\tc{d}$.
\end{Remark}

\bibliographystyle{IEEEbib}
\bibliography{IEEEabrv,StringDefinitions,strings}

\end{document}